\newcommand{\bpsi}{\mbox{\boldmath{$\Psi$}}}
\newcommand{\bphi}{\mbox{\boldmath{$\Phi$}}}
\newcommand{\bpsil}{\mbox{\boldmath{$\psi$}}}
\newcommand{\bphil}{\mbox{\boldmath{$\phi$}}}
\newcommand{\deft}{{\stackrel{\triangle}{=}}}
\newcommand{\mub}{\mu_{\operatorname{B}}}
\newcommand{\bbi}{{{\bf I}}}
\newcommand{\bbd}{{{\bf D}}}
\newcommand{\bbb}{{{\bf B}}}
\newcommand{\bbj}{{{\bf J}}}
\newcommand{\bbw}{{{\bf W}}}
\newcommand{\bbf}{{{\bf F}}}
\newcommand{\bbu}{{{\bf U}}}
\newcommand{\bbz}{{{\bf Z}}}
\newcommand{\bbm}{{{\bf M}}}
\newcommand{\bx}{{\bf x}}
\newcommand{\brv}{{\bf r}}
\newcommand{\by}{{\bf y}}
\newcommand{\ba}{{\bf a}}
\newcommand{\bb}{{\bf b}}
\newcommand{\bo}{{\bf 0}}
\newcommand{\bd}{{\bf d}}
\newcommand{\bc}{{\bf c}}
\newcommand{\I}{{\mathcal{I}}}
\newcommand{\R}{{\mathcal{R}}}
\newcommand{\CC}{{\mathbb{C}}}
\newcommand{\st}{\operatorname{s.t.} \,}
\newcommand{\bl}{\left(}
\newcommand{\br}{\right)}
\newcommand{\bba}{{\mathbf A}}
\newcommand{\obbd}{\overline{\bbd}}
\newcommand{\bv}{{\mathbf v}}
\newtheorem{theorem}{Theorem}
\newtheorem{lemma}{Lemma}
\newtheorem{proposition}{Proposition}
\title{Block-Sparsity: Coherence and Efficient Recovery}
\begin{document}
\ninept

 \maketitle

\begin{abstract}

We consider compressed sensing of block-sparse signals, i.e., sparse signals that
have nonzero coefficients occuring in clusters. Based on an uncertainty relation
for block-sparse signals, we define a block-coherence measure and we show that a block-version of the orthogonal
matching pursuit algorithm recovers block $k$-sparse signals in no more than $k$ steps
if the block-coherence is sufficiently small. The same condition on block-sparsity is
shown to guarantee successful recovery through a mixed $\ell_2/\ell_1$ optimization approach.
The significance of the results lies in the fact that making explicit use of block-sparsity
can yield better reconstruction properties than treating the signal
as being sparse in the conventional sense thereby ignoring the additional structure in the
problem.
\end{abstract}

\begin{keywords}
block sparsity, coherence, uncertainty relations
\end{keywords}

\section{Introduction}

We consider compressed sensing \cite{Candes06,Donoho06} of
sparse signals that exhibit additional structure in the form of the nonzero coefficients
occuring in clusters. It is therefore natural to ask whether explicitly taking
this block sparse structure into account yields improvements over treating
the signal as a conventional sparse signal. It was shown in \cite{richb08,EM082}\/
that the answer is in the affirmative. Moreover, in \cite{richb08} the restricted amplification
property was shown to provide a sufficient condition for robust recovery of model-compressible
(which includes block-sparse) signals. It is furthermore shown in \cite{richb08} that simple modifications
of the CoSaMP algorithm \cite{Tropp08} and of iterative hard thresholding \cite{Blumensath08} yield
reconstruction algorithms for the model-based case (including block-sparsity) that exhibit provable robustness properties.
A mixed $\ell_2/\ell_1$-norm algorithm for recovering block-sparse signals was introduced in \cite{EM082}.
The block restricted isometry property defined in \cite{EM082} provides equivalence
conditions for guaranteeing recovery of block-sparse signals.

The focus of the present paper is on the notion of coherence for block-sparse signals, i.e., block-coherence,
and can be seen as extending the program laid out in \cite{Tropp04,DE03} to the block-sparse case. We
introduce a block version of the orthogonal matching pursuit algorithm (BOMP) and find a sufficient
condition on block-coherence to guarantee recovery of block $k$-sparse signals through BOMP in no more than $k$ steps.
The same condition on block-coherence is shown to guarantee successful recovery through the mixed $\ell_2/\ell_1$ optimization
approach, described in \cite{EM082, Stoj08}. These results are akin to a sufficient condition on
conventional coherence reported in \cite{Tropp04} that guarantees recovery through OMP or $\ell_{1}$-optimization. Finally, we establish an uncertainty relation for
block-sparse signals and show how the block-coherence measure defined previously occurs naturally in
this uncertainty relation.

{\bf Notation.}\/ Throughout the paper, we denote vectors in $\CC^N$ by boldface
lowercase letters, e.g., $\bx$, and matrices by boldface uppercase
letters, e.g., $\bba$. The identity matrix is written as $\bbi$ or $\bbi_d$ when the dimension is not clear
from the context. Given a matrix $\bba$, $\bba^{T}$ and $\bba^{H}$ are its transpose and conjugate transpose, respectively,
$\bba^\dagger$ is the pseudo inverse, $\R(\bba)$ denotes its range space, $\bba_{i,j}$ is the element in the $i$th row and $j$th column,
and $\ba_{\ell}$ denotes its $\ell$th column. The $\ell$th
element of a vector $\bx$ is denoted by $x_{\ell}$.  The standard
Euclidean norm is $\|\bx\|_2=\sqrt{\bx^H\bx}$,
$\|\bx\|_1=\sum_{\ell} |x_{\ell}|$ is the $\ell_1$-norm,
$\|\bx\|_{\infty}=\max_{\ell} |x_{\ell}|$ is the
$\ell_{\infty}$-norm, and $\|\bx\|_{0}$ designates the number of nonzero entries in ${\bf x}$.
The Kronecker product of the matrices
$\bba$ and $\bbb$ is written as $\bba \otimes \bbb$. The spectral
radius of $\bba$ is denoted as
$\rho(\bba)=\lambda^{1/2}_{\max}(\bba^H\bba)$, where
$\lambda_{\max}(\bbb)$ is the largest eigenvalue of the
positive-semidefinite matrix $\bbb$.

\section{Block-Sparsity}

{\bf Block-sparsity.}\/ We consider the problem of representing a vector $\by \in \CC^L$
in a given dictionary $\bbd$ of size $L \times N$ with $L<N$, so
that
\begin{equation}
\label{eq:samples} \by=\bbd\bx
\end{equation}
for a coefficient vector $\bx \in \CC^N$. We require $\bx$ to be
block-sparse, where, throughout the paper, blocks are always
assumed to be of length $d$. To define block-sparsity,
we view $\bx$ as a concatenation of blocks (of length $d$) with $\bx[\ell]$ denoting
the $\ell$th sub-block, i.e.,
\begin{equation}
\label{eq:xblock} \bx^T=[\underbrace{x_1 \,\, \ldots \,\,
x_d}_{\bx[1]} \,\,\underbrace{x_{d+1}\,\,\ldots
x_{2d}}_{\bx[2]}\,\, \ldots\,\, \underbrace{x_{N-d+1}\,\,\ldots
\,\,x_{N}}_{\bx[M]}]^T
\end{equation}
with $N=Md$. We furthermore assume that $L=Rd$ with $R$ integer. A vector $\bx \in \CC^N$ is called block $k$-sparse
if $\bx[\ell]$ has nonzero Euclidean norm for at most $k$ indices $\ell$. When
$d=1$, block-sparsity reduces to the conventional definition of sparsity as in \cite{Candes06,Donoho06}.
Denoting
\begin{equation}
\label{eq:mixed} \|\bx\|_{2,0} = \sum_{\ell=1}^M
I(\|\bx[\ell]\|_2>0)
\end{equation}
where $I(\|\bx[\ell]\|_2>0)=1$ if $\|\bx[\ell]\|_2>0$ and $0$
otherwise, a block $k$-sparse vector $\bx$ is defined as a vector that satisfies
$\|\bx\|_{2,0} \leq k$. In the remainder of the paper conventional sparsity will be referred to simply as sparsity, in contrast to block-sparsity.

{\bf Problem statement.}\/ Our goal is to provide conditions on the dictionary $\bf{D}$ ensuring that the block-sparse vector $\bx$
can be reconstructed from measurements of the form (\ref{eq:samples}) through computationally efficient algorithms.
Our approach is largely based on \cite{Tropp04,ElBr02} (and the mathematical techniques used therein) where equivalent results are
provided for the sparse case. The results in \cite{Tropp04,ElBr02}\/ are stated
in terms of the dictionary coherence. Therefore, as a first step
in our development, we extend this conventional coherence measure to block-sparsity by defining block-coherence. Before introducing the corresponding
definition, we cite
the following proposition taken from \cite{EM082}.
\begin{proposition}
\label{prop:inv} The representation (\ref{eq:samples}) is unique
if and only if $\bbd {\bf g} \neq \bo$ for every ${\bf g} \neq \bo$ that
is block $2k$-sparse.
\end{proposition}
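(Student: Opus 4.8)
The plan is to prove both implications of the equivalence by exploiting the single algebraic observation that the class of block $k$-sparse vectors is closed under differences only up to a doubling of the sparsity level: the difference of two block $k$-sparse vectors is block $2k$-sparse, and conversely every block $2k$-sparse vector can be written as such a difference. This converts the statement about uniqueness of block $k$-sparse representations into a statement about the null space of $\bbd$ restricted to block $2k$-sparse vectors.

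For the sufficiency direction (``if''), I would argue by contradiction. Suppose $\bbd{\bf g} \neq \bo$ for every nonzero block $2k$-sparse ${\bf g}$, yet the representation is not unique, so that some $\by$ admits two distinct block $k$-sparse representations $\bx_1 \neq \bx_2$ with $\by = \bbd\bx_1 = \bbd\bx_2$. Setting ${\bf g} = \bx_1 - \bx_2 \neq \bo$, the block support of ${\bf g}$ is contained in the union of the block supports of $\bx_1$ and $\bx_2$, so that $\|{\bf g}\|_{2,0} \le \|\bx_1\|_{2,0} + \|\bx_2\|_{2,0} \le 2k$, i.e., ${\bf g}$ is block $2k$-sparse. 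But $\bbd{\bf g} = \bbd\bx_1 - \bbd\bx_2 = \bo$, contradicting the hypothesis; hence the representation is unique.

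For the necessity direction (``only if''), I would use the contrapositive and construct a counterexample. Assume there is a nonzero block $2k$-sparse ${\bf g}$ in the null space of $\bbd$. Since ${\bf g}$ has at most $2k$ active blocks, I would partition these active blocks into two groups, each of size at most $k$, and let $\bx_1$ agree with ${\bf g}$ on the first group and vanish elsewhere, and let $\bx_2$ agree with $-{\bf g}$ on the second group and vanish elsewhere. Then $\bx_1$ and $\bx_2$ are each block $k$-sparse, they are distinct because $\bx_1 - \bx_2 = {\bf g} \neq \bo$, and $\bbd\bx_1 - \bbd\bx_2 = \bbd{\bf g} = \bo$ yields $\bbd\bx_1 = \bbd\bx_2$. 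This exhibits a $\by$ with two distinct block $k$-sparse representations, so uniqueness fails.

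The argument is essentially a bookkeeping exercise in the block-sparse counting functional $\|\cdot\|_{2,0}$, and I do not anticipate a genuine obstacle. The only point requiring care is the sub-additivity claim that the block support of a difference is contained in the union of the individual block supports; this is immediate from the definition in (\ref{eq:mixed}), since a block of $\bx_1 - \bx_2$ can have positive Euclidean norm only if at least one of the corresponding blocks of $\bx_1$ or $\bx_2$ does. I would state this sub-additivity explicitly, as it is precisely the mechanism that links the $k$-sparse uniqueness question to the $2k$-sparse null-space condition.
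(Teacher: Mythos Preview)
Your argument is correct and is the standard null-space characterization; note, however, that the paper does not supply its own proof of this proposition---it is quoted from \cite{EM082}---so there is nothing to compare against in the present paper.
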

Similarly to (\ref{eq:xblock}), we can represent $\bbd$ as a
concatenation of column-blocks $\bbd[\ell]$ of size $L \times d$:
\begin{equation}
\label{eq:dblock} \bbd=[\underbrace{\bd_1 \,\, \ldots \,\,
\bd_d}_{\bbd[1]} \,\,\underbrace{\bd_{d+1}\,\,\ldots
\bd_{2d}}_{\bbd[2]}\,\, \ldots\,\,
\underbrace{\bd_{N-d+1}\,\,\ldots \,\,\bd_{N}}_{\bbd[M]}].
\end{equation}
Since from Proposition~\ref{prop:inv} the columns of $\bbd[\ell],\,\forall \ell$,
are linearly independent, we may write
$\bbd[\ell]=\bba[\ell]\bbw_{\ell}$ where $\bba[\ell]$ consists of
orthonormal columns that span $\R(\bbd[\ell])$ and $\bbw_{\ell}$
is invertible. Denoting by $\bba$ the $L \times N$ matrix
with blocks $\bba[\ell]$, and by $\bbw$ the $N\times N$ block-diagonal
matrix with blocks $\bbw_{\ell}$, we conclude that
$\bbd=\bba\bbw$. Since $\bbw$ is block-diagonal and invertible,
$\bc=\bbw\bx$ is block-sparse with the same block-sparsity level as
$\bx$. Therefore, in the sequel, we
assume, without loss of generality, that $\bbd$ consists of orthonormal blocks, i.e.,
$\bbd^H[\ell]\bbd[\ell]=\bbi_d$. Throughout the paper, we furthermore assume that the dictionaries we consider
satisfy the condition of Proposition \ref{prop:inv}.

{\bf Block-coherence.}\/ We define the block-coherence of $\bbd$ as
\begin{equation}
\label{eq:bc} \mub=\max_{\ell, r \neq \ell}
\frac{1}{d}\rho(\bbm[\ell,r]) \quad \mbox{with}\/ \quad
\bbm[\ell,r]=\bbd^H[\ell]\bbd[r].
\end{equation}
Note that $\bbm[\ell,r]$ is the $\ell r$th $d\,\times\,d$ block of the
$N \times N$ matrix $\bbm=\bbd^H\bbd$. When $d=1$, $\mub$ reduces to the conventional
definition of coherence \cite{DH01,ElBr02,Tropp04}
\begin{equation}
\label{eq:cc} \mu=\max_{\ell, r \neq \ell}
|\bd_{\ell}^H\bd_r|.
\end{equation}
It is easy to see
that the definition in (\ref{eq:bc}) is invariant to the choice of
orthonormal basis $\bbd[\ell]$ for $\R(\bbd[\ell])$. This is because
$\rho(\bbm[\ell,r])=\rho(\bbu^{H}_{\ell} \bbm[\ell,r] \bbu_{r})$. In the remainder of the paper conventional coherence
will be referred to simply as coherence, in contrast to block-coherence.

\begin{proposition}
\label{prop:cb} The block-coherence $\mub$ satisfies $0 \leq \mub \leq 1$.
\end{proposition}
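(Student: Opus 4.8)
The plan is to bound the single-block quantity $\rho(\bbm[\ell,r])$ for each fixed pair $\ell \neq r$, since by definition $\mub$ is merely the maximum of $\frac{1}{d}\rho(\bbm[\ell,r])$ over all such pairs. The lower bound $\mub \geq 0$ is immediate: $\rho(\bbm[\ell,r]) = \lambda_{\max}^{1/2}(\bbm[\ell,r]^{H}\bbm[\ell,r])$ is the square root of the largest eigenvalue of a positive-semidefinite matrix, hence nonnegative, and $d \geq 1$; so each term $\frac{1}{d}\rho(\bbm[\ell,r])$, and therefore their maximum, is $\geq 0$.

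For the upper bound I would exploit that $\rho(\cdot)$ is exactly the spectral (operator) norm induced by the Euclidean norm, and as such is submultiplicative: $\rho(\bba\bbb) \leq \rho(\bba)\,\rho(\bbb)$. Applying this to $\bbm[\ell,r] = \bbd^H[\ell]\bbd[r]$ gives $\rho(\bbm[\ell,r]) \leq \rho(\bbd^H[\ell])\,\rho(\bbd[r])$. The key step is then to invoke the standing assumption that each block $\bbd[\ell]$ has orthonormal columns, i.e., $\bbd^H[\ell]\bbd[\ell] = \bbi_d$. This gives $\rho(\bbd[r]) = \lambda_{\max}^{1/2}(\bbd^H[r]\bbd[r]) = \lambda_{\max}^{1/2}(\bbi_d) = 1$, and since the nonzero eigenvalues of $\bbd[\ell]\bbd^H[\ell]$ coincide with those of $\bbd^H[\ell]\bbd[\ell] = \bbi_d$, likewise $\rho(\bbd^H[\ell]) = 1$. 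Hence $\rho(\bbm[\ell,r]) \leq 1$, and dividing by $d \geq 1$ and maximizing over the pairs $\ell \neq r$ yields $\mub \leq 1$.

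The only points requiring care are the two standard facts used above: that $\rho$ is submultiplicative (equivalently, that the largest singular value of a product is at most the product of the largest singular values), and that $\rho(\bba^H) = \rho(\bba)$, which holds because $\bba^H\bba$ and $\bba\bba^H$ share the same nonzero eigenvalues. Neither is a genuine obstacle. It is worth noting that the argument in fact delivers the sharper estimate $\mub \leq 1/d$; the stated bound $\mub \leq 1$ follows a fortiori and reduces, for $d=1$, to the familiar fact that the conventional coherence $\mu$ lies in $[0,1]$.
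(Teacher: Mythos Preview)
Your proof is correct, but it proceeds along a different route than the paper's. The paper bounds $\rho(\bbm[\ell,r])$ via the induced column-sum norm: for any $d\times d$ matrix $\bba$, $\rho(\bba)\le \max_j\sum_i|\bba_{i,j}|\le d\max_{i,j}|\bba_{i,j}|$; since the columns $\bd_\ell$ are unit-norm, each entry of $\bbm[\ell,r]$ has modulus at most $1$, giving $\rho(\bbm[\ell,r])\le d$ and hence $\mub\le 1$. You instead use submultiplicativity of the spectral norm together with the standing assumption $\bbd^H[\ell]\bbd[\ell]=\bbi_d$ to get $\rho(\bbm[\ell,r])\le\rho(\bbd^H[\ell])\rho(\bbd[r])=1$, which yields the sharper conclusion $\mub\le 1/d$. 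Your argument is cleaner and strictly stronger in its conclusion, but it genuinely needs the orthonormal-block reduction; the paper's entrywise argument requires only that the individual columns of $\bbd$ be unit-norm, so it applies to the dictionary before the orthonormalization step as well.
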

\begin{proof} Clearly $\mub \geq 0$. To prove that $\mub \leq 1$, note that
$\rho(\bba) \leq \|\bba\|$,
where $\|\bba\|$ is any matrix norm. In particular, if $\bba$ is
a $d \times d$ matrix, then
\begin{equation}
\label{eq:me} \rho(\bba) \leq \max_j \sum_{i} |\bba_{i, j}| \leq d
\max_{i,j}|\bba_{i, j}|.
\end{equation}
In our case, $\bba=\bbm[\ell,r]$. Since the columns of $\bbd$ are
normalized, all the elements of $\bbm[\ell,r]$ have absolute value
smaller than or equal to $1$, so that from (\ref{eq:me}),
$\rho(\bbm[\ell,r]) \leq d$, and hence $\mub \leq 1$.
\end{proof}

It is interesting to compare $\mub$ with the coherence $\mu$
defined in (\ref{eq:cc}) for the same dictionary $\bbd$.
\begin{proposition}
\label{prop:co} For any dictionary $\bbd$, we have $\mub \leq \mu$.
\end{proposition}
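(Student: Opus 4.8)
The plan is to establish the stronger pointwise bound $\frac{1}{d}\rho(\bbm[\ell,r]) \leq \mu$ for every pair of distinct block indices $\ell \neq r$; maximizing over such pairs then yields $\mub \leq \mu$ immediately from the definitions (\ref{eq:bc}) and (\ref{eq:cc}). The crucial observation is that when $\ell \neq r$, every entry of the $d \times d$ block $\bbm[\ell,r] = \bbd^H[\ell]\bbd[r]$ is an inner product $\bd_i^H \bd_j$ between two columns of $\bbd$ lying in \emph{different} blocks, and hence distinct columns. By the definition of $\mu$ in (\ref{eq:cc}), each such entry therefore satisfies $|(\bbm[\ell,r])_{i,j}| \leq \mu$.

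Given this, I would simply reuse the spectral-radius estimate already established in the proof of Proposition~\ref{prop:cb}: for any $d \times d$ matrix $\bba$, inequality (\ref{eq:me}) gives $\rho(\bba) \leq d \max_{i,j} |\bba_{i,j}|$. Applying this with $\bba = \bbm[\ell,r]$ and invoking the entrywise bound from the previous step yields $\rho(\bbm[\ell,r]) \leq d\mu$. Dividing by $d$ gives $\frac{1}{d}\rho(\bbm[\ell,r]) \leq \mu$, and maximizing over $\ell \neq r$ completes the argument.

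There is no genuine obstacle here; the proposition is essentially a one-line consequence of the bound (\ref{eq:me}). The only point requiring a moment's care is the book-keeping between the two different uses of the indices $\ell, r$: in (\ref{eq:cc}) they range over individual columns, whereas in (\ref{eq:bc}) they range over blocks. One must verify that the columns indexed by a block $\ell$ are disjoint from those indexed by a block $r \neq \ell$, so that every entry of $\bbm[\ell,r]$ is indeed a distinct-column inner product eligible for the bound $\mu$. This also makes transparent why the normalization by $1/d$ appears in the definition of block-coherence: it is precisely the factor needed to cancel the $d$ that arises when passing from the entrywise maximum of a $d \times d$ matrix to its spectral radius.
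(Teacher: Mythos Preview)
Your proposal is correct and is exactly the argument the paper has in mind: the paper's own proof is the single sentence ``The proof follows immediately from (\ref{eq:me}),'' and your write-up simply unpacks that sentence, noting that for $\ell\neq r$ every entry of $\bbm[\ell,r]$ is an inner product of distinct columns and hence bounded by $\mu$, so (\ref{eq:me}) gives $\rho(\bbm[\ell,r])\le d\mu$.
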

\noindent The proof follows immediately from (\ref{eq:me}).

\section{Uncertainty Relation for Block-Sparsity}

We next show how the block-coherence $\mub$ defined above naturally appears in an
uncertainty relation for block-sparse signals. This uncertainty relation
generalizes the corresponding result for the sparse case
reported in \cite{ElBr02}.

The uncertainty principle for the sparse case
is concerned with pairs of representations of a vector $\bx \in
\CC^N$ in two different orthonormal bases for $\CC^N$:
$\{\bphil_{\ell},1 \leq \ell \leq N\}$ and $\{\bpsil_{\ell},1 \leq
\ell \leq N\}$ \cite{DH01,ElBr02}. Any vector $\bx\,\in\,\CC^N$ can
be expanded uniquely in terms of each one of these bases according to:
\begin{equation}
\label{eq:xn} \bx=\sum_{\ell=1}^N a_\ell
\bphil_{\ell}=\sum_{\ell=1}^N b_\ell \bpsil_{\ell}.
\end{equation}
The uncertainty relation sets limits on the sparsity of the
decompositions (\ref{eq:xn}) for any $\bx \in \CC^N$. Specifically,
denoting $A=\|\ba\|_0$ and
$B=\|\bb\|_0$, it is shown in \cite{ElBr02} that
\begin{equation}
\label{eq:ucd} \frac{1}{2}\bl A+B \br \geq \sqrt{AB} \geq
\frac{1}{\mu(\bphi,\bpsi)}
\end{equation}
where $\mu(\bphi,\bpsi)$ is the coherence between $\bphi$ and
$\bpsi$, defined by
\begin{equation}
\label{eq:mud}
\mu(\bphi,\bpsi)=\max_{\ell,r}|\bphil^H_{\ell}\bpsil_r|.
\end{equation}

In \cite{DH01} it is shown that $1/\sqrt{N} \leq \mu(\bphi,\bpsi) \leq
1$.
We now develop an uncertainty principle for block-sparse
decompositions, analogous to (\ref{eq:ucd}). Specifically,
we find a result that is equivalent to (\ref{eq:ucd}) with $A$ and $B$ replaced
by block-sparsity levels as defined in (\ref{eq:mixed})
and $\mu(\bphi,\bpsi)$ replaced by the block-coherence between the
orthonormal bases considered, as defined in (\ref{eq:bc2}).

\begin{theorem} \cite{EB08}
\label{thm:uncertainty} Let $\bphi,\bpsi$ be two unitary
matrices with $L\,\times\,d$ blocks $\{\bphi[\ell],\bpsi[\ell],1 \leq \ell \leq
M\}$ and let $\bx \in \CC^N$ satisfy
\begin{equation}
\bx=\sum_{\ell=1}^M \bphi[\ell]\ba[\ell]=\sum_{\ell=1}^M
\bpsi[\ell]\bb[\ell].
\end{equation}
Let $A=\|\ba\|_{2,0}$ and $B=\|\bb\|_{2,0}$. Then,
\begin{equation}
\label{eq:uca} \frac{1}{2}(A+B)\geq \sqrt{AB} \geq \frac{1}{d
\mub(\bphi,\bpsi)}
\end{equation}
where
\begin{equation}
\label{eq:bc2} \mub(\bphi,\bpsi)=\max_{\ell,r}
\frac{1}{d}\rho(\bphi^H[\ell]\bpsi[r]).
\end{equation}
\end{theorem}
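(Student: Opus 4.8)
The plan is to prove the two inequalities in (\ref{eq:uca}) separately, the left one being immediate and the right one carrying all the content. The bound $\frac{1}{2}(A+B)\geq\sqrt{AB}$ is simply the arithmetic--geometric mean inequality applied to the nonnegative reals $A,B$, so nothing further is needed there. For the right-hand inequality $\sqrt{AB}\geq 1/(d\,\mub(\bphi,\bpsi))$ I would lift the scalar argument behind (\ref{eq:ucd}) to the block setting, replacing scalar inner products by the $d\times d$ cross-Gram blocks $\bpsi^H[r]\bphi[\ell]$ and the modulus by the operator norm $\rho(\cdot)$.

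First I would normalize, setting $\|\bx\|_2=1$ without loss of generality. Since $\bphi$ and $\bpsi$ are unitary, their blocks satisfy $\bphi^H[\ell]\bphi[s]=\delta_{\ell s}\bbi_d$ and likewise for $\bpsi$; hence both block expansions are orthogonal, giving $\|\ba\|_2^2=\sum_\ell\|\ba[\ell]\|_2^2=1$ and $\|\bb\|_2^2=\sum_r\|\bb[r]\|_2^2=1$. The key step is to recover each coefficient block of one representation by projecting $\bx$ onto the corresponding block of the other basis: $\bb[r]=\bpsi^H[r]\bx=\sum_\ell \bpsi^H[r]\bphi[\ell]\,\ba[\ell]$, where the sum effectively runs only over the $A$ indices $\ell$ with $\ba[\ell]\neq\bo$.

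Next I would bound $\|\bb[r]\|_2$. Using the triangle inequality together with the fact that, in the paper's convention, $\rho(\bba)=\lambda_{\max}^{1/2}(\bba^H\bba)$ is exactly the spectral norm, so that $\|\bba\bv\|_2\leq\rho(\bba)\|\bv\|_2$, I obtain $\|\bb[r]\|_2\leq\sum_{\ell}\rho(\bpsi^H[r]\bphi[\ell])\,\|\ba[\ell]\|_2$. Because $\bpsi^H[r]\bphi[\ell]$ and $\bphi^H[\ell]\bpsi[r]$ are conjugate transposes, they share the same largest singular value, so each factor is at most $d\,\mub(\bphi,\bpsi)$ by the definition (\ref{eq:bc2}). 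A Cauchy--Schwarz step over the $A$ active blocks then yields $\|\bb[r]\|_2\leq d\,\mub(\bphi,\bpsi)\sqrt{A}\,\|\ba\|_2=d\,\mub(\bphi,\bpsi)\sqrt{A}$, a uniform bound on every block of $\bb$. Summing over the $B$ nonzero blocks of $\bb$ and using $\|\bb\|_2=1$ gives $1=\sum_r\|\bb[r]\|_2^2\leq B\,d^2\mub^2(\bphi,\bpsi)\,A$, which rearranges to $\sqrt{AB}\geq 1/(d\,\mub(\bphi,\bpsi))$.

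I expect no serious obstacle, as the argument is a structural lift of the scalar proof; the two points requiring care are (i) recognizing that the quantity $\rho(\cdot)$ defined in the paper is the operator norm, which is what legitimizes both the submultiplicative bound $\|\bba\bv\|_2\leq\rho(\bba)\|\bv\|_2$ and the passage to the block-coherence, and (ii) tracking the factor $d$ that distinguishes (\ref{eq:uca}) from (\ref{eq:ucd}). The latter enters precisely because $\rho(\bpsi^H[r]\bphi[\ell])$ can be as large as $d\,\mub(\bphi,\bpsi)$ rather than $\mub(\bphi,\bpsi)$, reflecting that a $d\times d$ cross-block may amplify by up to its dimension relative to its per-entry coherence.
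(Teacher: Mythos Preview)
The paper does not actually prove Theorem~\ref{thm:uncertainty}; it is stated with a citation to \cite{EB08} and no argument is given in the present paper. So there is no ``paper's own proof'' to compare against here.

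That said, your proposal is correct and is precisely the expected argument: it is the block lift of the Elad--Bruckstein proof of (\ref{eq:ucd}), with scalar inner products $\bphil_\ell^H\bpsil_r$ replaced by $d\times d$ cross-Gram blocks and $|\cdot|$ replaced by the spectral norm $\rho(\cdot)$. The logical chain---normalize $\|\bx\|_2=1$, express $\bb[r]=\sum_\ell \bpsi^H[r]\bphi[\ell]\ba[\ell]$, bound each term via $\rho$, apply Cauchy--Schwarz over the $A$ active blocks, and sum the squares over the $B$ active blocks---is sound, and this is almost certainly what \cite{EB08} does. One cosmetic remark: your closing sentence about the factor $d$ ``reflecting that a $d\times d$ cross-block may amplify by up to its dimension'' overinterprets things; the $d$ enters purely because the definition (\ref{eq:bc2}) normalizes $\rho(\bphi^H[\ell]\bpsi[r])$ by $1/d$, so $\rho\le d\,\mub$ is just unwinding the definition, not a separate amplification phenomenon.
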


It can easily be shown that for ${\bf D}$ consisting of the orthonormal
bases $\bphi$ and $\bpsi$, i.e., ${\bf D}=[\bphi\,\,\bpsi]$, we have $\mub(\bphi,\bpsi)=\mub$, where
$\mub$ is as defined in (\ref{eq:bc}) and associated with ${\bf D}=[\bphi\,\,\bpsi]$.


The bound provided by Theorem~\ref{thm:uncertainty} can be tighter
than that obtained by applying the conventional uncertainty
relation (\ref{eq:ucd}) to the block-sparse case.
This can be seen by using $\|\ba\|_{0} \leq  d \|\ba\|_{2,0}$,
$\|\bb\|_{0} \leq  d \|\bb\|_{2,0}$, and (\ref{eq:ucd}) to obtain
\begin{equation}
\sqrt{\|\ba\|_{2,0}\|\bb\|_{2,0}} \geq \frac{1}{d
\mu}.
\end{equation}
Since $\mub\,\le\,\mu$, this bound can be looser than (\ref{eq:uca}).

\subsection{Block-incoherent dictionaries}

As already noted, in the sparse case (i.e., $d=1$) for any two orthonormal bases ${\bf \Phi}$ and ${\bf \Psi}$, we have $\mu\,\ge\,1/\sqrt{N}$. We next
show that the block-coherence satisfies a similar inequality, namely $\mub\,\ge\,1/\sqrt{dN}$. Evidently, the lower bound on $\mu$ is $\sqrt{d}$ times larger
than that on $\mub$. To prove the lower bound on $\mub$, let ${\bf \Phi}$ and ${\bf \Psi}$ denote two
orthonormal bases for $\CC^N$ and let $\bba=\bphi^H\bpsi$ where $\bba[\ell,r]$ stands for
the $(\ell, r)$th $d\,\times\, d$ block of ${\bf A}$.
With $M=N/d$, we have
\begin{eqnarray}
\label{eq:bco2} M^2 \mub^2 & \geq & \sum_{\ell=1}^M \sum_{r=1}^M
\frac{1}{d^2}\lambda_{\max}(\bba^H[\ell,r]\bba[\ell,r]) \nonumber
\\ &\geq &\frac{1}{d^2}\lambda_{\max}\bl \sum_{\ell=1}^M
\sum_{r=1}^M \bba^H[\ell,r]\bba[\ell,r]\br.
\end{eqnarray}
Now, it holds that
\begin{equation}
\label{eq:bco3} \sum_{\ell=1}^M \sum_{r=1}^M
\bba^H[\ell,r]\bba[\ell,r]= \sum_{r=1}^M \bpsi^H[r]\bl
\sum_{\ell=1}^M \bphi[\ell]\bphi^H[\ell] \br \bpsi[r].
\end{equation}
\sloppy Since $\bphi$ consists of orthonormal columns,
$\sum_{\ell} \bphi[\ell]\bphi^H[\ell]=\bphi\bphi^H=\bbi_L$.
Furthermore, since $\bpsi[r]$ consists of orthonormal columns, $\forall r$, we have
$\bpsi^H[r]\bpsi[r]= \bbi_d$, $\forall r$. Therefore, (\ref{eq:bco2}) becomes
\begin{equation}
\label{eq:bco4}   \mub^2 \geq \frac{1}{M d^2}=\frac{1}{dN}
\end{equation}
which concludes the proof.

We now construct a pair of bases that achieves the lower bound on $\mub$ and
therefore has the smallest possible block-coherence. Let $\bbf$
be the DFT matrix of size $M=N/d$ with $\bbf_{\ell, r}=
(1/\sqrt{M})\exp(j 2\pi \ell r/M)$. Define $\bphi=\bbi_N$ and
\begin{equation}
\label{eq:kron} \bpsi=\bbf \otimes \bbu_d
\end{equation}
where $\bbu_d$ is an arbitrary $d\,\times\,d$ unitary matrix. For this
choice, \sloppy $\bphi^H[\ell]\bpsi[r]=\bbf_{\ell, r} \bbu_d$.
Since $\rho(\bbu_d)=1$ and $|\bbf_{\ell, r}|=1/\sqrt{M}$, we get
\begin{equation}
\label{eq:mubl} \mub=\frac{1}{d\sqrt{M}}=\frac{1}{\sqrt{dN}}.
\end{equation}
When $d=1$, this basis pair reduces to the spike-Fourier pair
which is well known to be maximally incoherent \cite{DH01}.


\section{Efficient Recovery Algorithms}

We now give operational meaning to block-coherence by showing that if it
is small enough, a block-sparse signal $\bx$ can be recovered from ${\bf y}={\bf D}\bx$ using computationally efficient
algorithms. We consider two different algorithms, namely
the mixed $\ell_2/\ell_1$ optimization
program proposed in \cite{EM082}:
\begin{eqnarray}
\label{eq:l1} \min_{\bx}  \sum_{\ell=1}^M \|\bx[\ell]\|_2 \quad
\st  \by=\bbd\bx
\end{eqnarray}
and an extension of the orthogonal matching pursuit (OMP) algorithm \cite{Mallat94} to the block-sparse case described below and termed BOMP.
We then show that both methods recover the correct block-sparse $\bx$ as long as $\mub$ associated with
$\bbd$ is small enough.

\subsection{Block OMP}

The BOMP algorithm is similar in spirit to the conventional OMP algorithm, and
can serve as a computationally attractive alternative to
(\ref{eq:l1}).

The algorithm begins by initializing the residual as $\brv_0=\by$. At the $\ell$th stage ($\ell \geq 1$)
we choose the subspace that is best matched to
$\brv_{\ell-1}$ according to:
\begin{equation}
i_{\ell}=\arg \max \|\bbd^H[i] \brv_{\ell-1}\|_2.
\end{equation}
Once the index $i_\ell$ is chosen, we find the optimal
coefficients by computing $\bx_{\ell}[i]$ as the solution to
\begin{equation}
\min \left\|\by-\sum_{i \in \I}\bbd[i]\bx_{\ell}[i]\right\|_{2}^2.
\end{equation}
Here $\I$ is the set of chosen indices $i_j,1 \leq j \leq \ell$. The
residual is then updated as
\begin{equation}
\brv_{\ell}=\by-\sum_{i \in \I}\bbd[i]\bx_{\ell}[i].
\end{equation}

\vspace*{-3mm}

\subsection{Recovery conditions}

Our main result, summarized in Theorem~\ref{thm:mu} below, is that any block $k$-sparse vector $\bx$ can be
recovered from measurements $\by=\bbd\bx$ using either the BOMP algorithm
or (\ref{eq:l1}) if the block-coherence  satisfies
$kd<(\mub^{-1}+d)/2$.  If $\bx$ was treated as a (conventional)
$kd$-sparse vector without exploiting knowledge of the block-sparse structure,
a sufficient condition for perfect recovery using OMP or (\ref{eq:l1}) for $d=1$ (a.k.a. basis pursuit)
is $kd<(\mu^{-1}+1)/2$. Since
$\mu \geq \mub$, exploiting the block structure by using BOMP or (\ref{eq:l1}) recovery is guaranteed for a potentially higher sparsity
level.

To state our results, suppose that $\bx_0$ is a length-$N$
block $k$-sparse vector, and let $\by=\bbd\bx_0$ where $\bbd$
consists of blocks $\bbd[\ell]$ with orthonormal columns.
 Let $\bbd_0$ denote the $L \times (kd)$ matrix whose blocks
correspond to the non-zero blocks of $\bx_0$, and let $\obbd_0$ be
the matrix of size $L \times (N-kd)$ which contains the columns of
$\bbd$ not in $\bbd_0$. We then have the following theorem proved in Section~\ref{sec:proofth2}.
\begin{theorem}
\label{thm:sc} Let $\bx_0\,\in\,\CC^{N}$ be a block $k$-sparse vector
with blocks of length $d$, and let $\by=\bbd\bx_0$ for a given $L
\times N$ matrix $\bbd$. A sufficient condition for the output of
the BOMP and of (\ref{eq:l1}) to equal $\bx_0$ is that
\begin{equation}
\label{eq:sc} \rho_c(\bbd_0^\dagger \obbd_0) <1
\end{equation}
where
\begin{equation}
\rho_c(\bba)=\max_{\ell} \sum_r \rho(\bba[r,\ell])
\end{equation}
and $\bba[r,\ell]$ is the $(r,\ell)$th $d\,\times\,d$ block of $\bba$.
\end{theorem}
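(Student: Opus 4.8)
The plan is to establish the sufficient condition \eqref{eq:sc} separately for each of the two algorithms, since they require different arguments, but both hinge on the same quantity $\rho_c(\bbd_0^\dagger \obbd_0)$. The guiding intuition is the classical Tropp-style analysis: $\bbd_0^\dagger \obbd_0$ measures how well the inactive blocks can be represented in terms of the active blocks, and keeping this below $1$ (in the appropriate block-mixed norm induced by $\rho_c$) prevents the algorithms from ever preferring an inactive block over an active one. Throughout I would exploit the fact established in Section~2 that we may assume each block $\bbd[\ell]$ has orthonormal columns, so that $\bbd^H[\ell]\bbd[\ell]=\bbi_d$.

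\textbf{BOMP analysis.} First I would argue by induction that BOMP selects only correct (active) blocks. Assume the residual $\brv_{\ell-1}$ lies in $\R(\bbd_0)$, which holds initially since $\by=\bbd_0\bx_0|_{\text{active}}$ and is preserved because the least-squares update subtracts a vector in $\R(\bbd_0)$. Writing $\brv_{\ell-1}=\bbd_0\bc$ for some coefficient vector $\bc$, the greedy selection criterion compares $\|\bbd^H[i]\brv_{\ell-1}\|_2$ over active versus inactive indices $i$. A correct step is guaranteed if
\begin{equation}
\frac{\max_{i\,\notin\,\text{active}}\|\obbd_0^H[i]\brv_{\ell-1}\|_2}{\max_{i\,\in\,\text{active}}\|\bbd_0^H[i]\brv_{\ell-1}\|_2}<1.
\end{equation}
The key step is to bound the numerator in terms of the denominator by inserting $\brv_{\ell-1}=\bbd_0(\bbd_0^H\bbd_0)^{-1}\bbd_0^H\brv_{\ell-1}$ and recognizing $\obbd_0^H\bbd_0(\bbd_0^H\bbd_0)^{-1}=(\bbd_0^\dagger\obbd_0)^H$. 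Passing from matrix products to block-wise spectral radii via the submultiplicativity of $\rho$ over blocks, the ratio is controlled by $\rho_c\bl(\bbd_0^\dagger\obbd_0)^H\br$. I would then note that $\rho_c$ of a matrix and of its conjugate transpose agree (since $\rho(\bba[r,\ell])=\rho(\bba^H[\ell,r])$ and the row/column maximum swap accordingly), so the condition reduces to \eqref{eq:sc}. Once all $k$ active blocks have been selected, the residual is zero and BOMP returns $\bx_0$ exactly.

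\textbf{Mixed $\ell_2/\ell_1$ analysis.} For \eqref{eq:l1} I would use the standard dual/feasibility certificate argument adapted to the block setting. It suffices to exhibit a vector that certifies $\bx_0$ is the unique minimizer: concretely, one shows that any feasible perturbation $\bx_0+\bh$ with $\bbd\bh=\bo$ strictly increases the block-$\ell_1$ objective $\sum_\ell\|\bx[\ell]\|_2$. Splitting $\bh$ into its active and inactive block-supports and using the optimality inequality together with $\bbd_0\bh_{\text{active}}=-\obbd_0\bh_{\text{inactive}}$, the increase in objective is lower-bounded by a term of the form $(1-\rho_c(\bbd_0^\dagger\obbd_0))\sum_{\ell\,\notin\,\text{active}}\|\bh[\ell]\|_2$, which is strictly positive under \eqref{eq:sc} unless $\bh=\bo$.

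\textbf{The main obstacle} I anticipate is the block-wise passage from ordinary matrix inequalities to inequalities in the $\rho_c$ norm: one must verify that $\rho_c$ behaves like a genuine operator norm on block matrices—in particular that it is submultiplicative, $\rho_c(\bba\bbb)\le\rho_c(\bba)\rho_c(\bbb)$, and that it dominates the block-$\ell_2\to\ell_\infty$ gain needed to bound $\|\obbd_0^H\brv\|_2$ by the active counterpart. Establishing this submultiplicativity cleanly (likely by interpreting $\rho_c$ as the operator norm induced by the mixed $\ell_2/\ell_\infty$ vector norm on block vectors) is the technical heart of the argument; once it is in place, both the BOMP induction and the $\ell_2/\ell_1$ certificate follow along familiar Tropp-type lines.
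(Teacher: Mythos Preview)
Your overall plan mirrors the paper's: induction on BOMP steps showing the residual stays in $\R(\bbd_0)$ and the greedy ratio stays below $1$, plus a separate argument for the mixed $\ell_2/\ell_1$ program. Your $\ell_2/\ell_1$ part (null-space perturbation $\bh$, split into active/inactive, bound $\|\bh_{\text{active}}\|_{2,1}\le\rho_c(\bbd_0^\dagger\obbd_0)\|\bh_{\text{inactive}}\|_{2,1}$) is a clean variant of what the paper does---there an arbitrary alternative representation $\bx'$ is compared to $\bx_0$ directly via $\|\bc_0\|_{2,1}=\|\bbd_0^\dagger\bbd'\bc'\|_{2,1}<\rho_c(\bbd_0^\dagger\bbd')\|\bc'\|_{2,1}\le\|\bc'\|_{2,1}$, invoking a strict-inequality lemma (Lemma~\ref{lemma:1inq}) rather than null-space splitting. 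Both routes work and are essentially equivalent.

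However, your BOMP step contains a genuine slip in the $\rho_r$/$\rho_c$ bookkeeping. The greedy selection criterion involves the block-$\ell_\infty$ norm $\|\cdot\|_{2,\infty}$, so after inserting $\brv_{\ell-1}=(\bbd_0^\dagger)^H\bbd_0^H\brv_{\ell-1}$ the ratio is bounded by the $\|\cdot\|_{2,\infty}$ operator norm of $(\bbd_0^\dagger\obbd_0)^H$. By Lemma~\ref{lemma:norms} this is at most $\rho_r((\bbd_0^\dagger\obbd_0)^H)$, \emph{not} $\rho_c((\bbd_0^\dagger\obbd_0)^H)$; and it is the identity $\rho_r(\bba)=\rho_c(\bba^H)$ (row sums become column sums under conjugate transpose) that then yields $\rho_c(\bbd_0^\dagger\obbd_0)$. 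Your stated claim that $\rho_c$ is invariant under conjugate transpose is false in general: $\rho_c(\bba^H)=\rho_r(\bba)$, and these differ whenever the block row- and column-sums of spectral radii are unequal. Relatedly, your proposed interpretation of $\rho_c$ as the operator norm induced by the mixed $\ell_2/\ell_\infty$ vector norm is backwards---$\rho_c$ bounds the $\ell_2/\ell_1$-induced norm while $\rho_r$ bounds the $\ell_2/\ell_\infty$-induced norm (cf.\ Lemma~\ref{lemma:norms}). Once you replace $\rho_c$ by $\rho_r$ at that intermediate step and drop the incorrect invariance claim, your BOMP argument becomes identical to the paper's. Submultiplicativity of $\rho_c$ is not actually needed anywhere; the operator-norm bounds of Lemma~\ref{lemma:norms} do all the work.
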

Note that
\begin{equation}
\rho_c( \bbd_0^\dagger \obbd_0)=\max_{\ell} \rho_c(\bbd_0^\dagger
\obbd_0[\ell]).
\end{equation}
Therefore, (\ref{eq:sc}) implies that for all $\ell$,
\begin{equation}
\label{eq:scs} \rho_c(\bbd_0^\dagger \obbd_0[\ell])<1.
\end{equation}

The sufficient condition (\ref{eq:sc}) depends on ${\bf D}_{0}$ and hence on
the location of the nonzero blocks in $\bx_{0}$, which, of course, is not known in advance.
Nonetheless, as the following theorem shows, (\ref{eq:sc}) holds whenever the dictionary ${\bf D}$ has low block-coherence.
\begin{theorem} \cite{EB08}
\label{thm:mu} Let $\mub$ be the block-coherence defined by
(\ref{eq:bc}). Then (\ref{eq:sc}) is satisfied if
\begin{equation}
\label{eq:muc1} kd<\frac{1}{2} (\mub^{-1}+d).
\end{equation}
\end{theorem}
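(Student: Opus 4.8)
The plan is to control $\rho_c(\bbd_0^\dagger \obbd_0)$ by exploiting that $\rho(\cdot)=\lambda_{\max}^{1/2}(\cdot^H\,\cdot)$ is nothing but the spectral (operator) norm, so it is both subadditive and submultiplicative on matrices of compatible dimension. First I would verify that the induced block functional $\rho_c(\bba)=\max_\ell \sum_r \rho(\bba[r,\ell])$ is itself a subadditive, submultiplicative matrix norm, i.e., $\rho_c(\bba+\bbb)\le\rho_c(\bba)+\rho_c(\bbb)$ and $\rho_c(\bba\bbb)\le \rho_c(\bba)\,\rho_c(\bbb)$, with $\rho_c(\bbi)=1$. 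The submultiplicativity follows by writing the $(r,\ell)$th $d\times d$ block of $\bba\bbb$ as $\sum_s \bba[r,s]\bbb[s,\ell]$, applying subadditivity and submultiplicativity of $\rho$ blockwise, and then interchanging the order of summation; this is the block analogue of the max-column-sum norm estimate and is the technical backbone of the whole argument.

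Next I would factor $\bbd_0^\dagger \obbd_0=(\bbd_0^H\bbd_0)^{-1}\bbd_0^H\obbd_0$, which is legitimate because Proposition~\ref{prop:inv} guarantees that the $kd$ columns of $\bbd_0$ are linearly independent, so $\bbG\deft\bbd_0^H\bbd_0$ is invertible. Submultiplicativity then yields $\rho_c(\bbd_0^\dagger\obbd_0)\le \rho_c(\bbG^{-1})\,\rho_c(\bbd_0^H\obbd_0)$. For the second factor, every $d\times d$ block of $\bbd_0^H\obbd_0$ has the form $\bbm[\ell,r]$ with $\ell\ne r$ (a support block against a non-support block), so its spectral norm is at most $d\mub$ by the definition (\ref{eq:bc}); summing over the $k$ block-rows in each block-column gives $\rho_c(\bbd_0^H\obbd_0)\le kd\mub$. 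For the first factor I would write $\bbG=\bbi_{kd}+\bbE$, where $\bbE$ collects the off-diagonal blocks $\bbm[\ell,r]$ (the diagonal blocks equal $\bbi_d$ because $\bbd$ has orthonormal blocks). Each block of $\bbE$ again satisfies $\rho\le d\mub$, and there are $k-1$ of them per block-column, so $\rho_c(\bbE)\le (k-1)d\mub$.

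Provided $\rho_c(\bbE)<1$, the Neumann series $\bbG^{-1}=\sum_{j\ge 0}(-\bbE)^j$ converges and, by subadditivity and submultiplicativity of $\rho_c$ together with $\rho_c(\bbi)=1$, gives $\rho_c(\bbG^{-1})\le \sum_{j\ge 0}\rho_c(\bbE)^j=(1-(k-1)d\mub)^{-1}$. Combining the two estimates yields
\[
\rho_c(\bbd_0^\dagger\obbd_0)\;\le\;\frac{kd\mub}{1-(k-1)d\mub}.
\]
Finally I would require the right-hand side to be strictly less than $1$, which rearranges to $(2k-1)d\mub<1$, i.e. $kd<\tfrac12(\mub^{-1}+d)$; this is exactly (\ref{eq:muc1}) and it also forces $(k-1)d\mub<1$, so the Neumann series step is justified a posteriori.

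The main obstacle is the first step: establishing rigorously that $\rho_c$ is a genuine submultiplicative matrix norm on block-partitioned matrices, since $\rho$ acts on individual $d\times d$ blocks while the matrix products mix blocks across the partition. The bookkeeping of which spectral-norm inequality is applied to which block, and the verification $\rho_c(\bbi)=1$, must be carried out precisely for the two norm-product bounds to hold; once this is in place, everything downstream is routine estimation with $\mub$ and a geometric series.
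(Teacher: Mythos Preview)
The paper does not actually prove Theorem~\ref{thm:mu}; it merely states the result with a citation to the companion manuscript~\cite{EB08} and gives no argument here. Your proposal is correct and is exactly the block analogue of the Tropp/Donoho--Elad coherence argument that appears in~\cite{EB08}: verify that $\rho_c$ is a submultiplicative matrix norm on block-partitioned matrices, factor $\bbd_0^\dagger\obbd_0=(\bbd_0^H\bbd_0)^{-1}\bbd_0^H\obbd_0$, bound the second factor by $kd\mub$ and the first by a Neumann series on $\bbi_{kd}+\bbE$ with $\rho_c(\bbE)\le(k-1)d\mub$, and rearrange $kd\mub/(1-(k-1)d\mub)<1$ into~(\ref{eq:muc1}). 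Since the present paper defers the proof entirely, there is nothing further to compare against; your approach is the expected one.
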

\noindent For $d=1$, we recover the results of \cite{Tropp04,DE03}.

\section{Proof of Theorem 2} \label{sec:proofth2}

We start with some definitions. For $\bx \, \in \,\CC^N$, we define the general mixed $\ell_2/\ell_p$  norm:
\begin{equation}
\label{eq:2p} \|\bx\|_{2,p} = \|\bv\|_p, \quad \mbox{where }
v_{\ell}=\|\bx[\ell]\|_2,
\end{equation}
and the $\bx[\ell]$ are consecutive length-$d$ blocks.
For an $L \times N$ matrix $\bba$ with $L=Rd$ and $N=Md$, where $R$ and $M$ are integers, we define the
mixed matrix norm (with block size $d$) as
\begin{equation}
\label{eq:2infm} \|\bba\|_{2,p} = \max_{\bx}
\frac{\|\bba\bx\|_{2,p}}{\|\bx\|_{2,p}}.
\end{equation}

The following lemma provides bounds on the mixed matrix norms for
$p=1,\infty$, which we will use in the sequel.
\begin{lemma} \cite{EB08}
\label{lemma:norms} Let $\bba$ be an $L \times N$ matrix with
$L=Rd$ and $N=Md$. Denote by $\bba[\ell,r]$ the $(\ell, r)$th $d\,\times\, d$ block of
$\bba$.
Then,
\vspace*{-2mm}
\begin{eqnarray}
\label{eq:lemi} \|\bba\|_{2,\infty} & \leq  & \max_r
\sum_{\ell} \rho(\bba[r,\ell])\, \deft \, \rho_r(\bba)  \\
\label{eq:lem1} \|\bba\|_{2,1} & \leq & \max_{\ell} \sum_r
\rho(\bba[r,\ell])\, \deft \, \rho_c(\bba).
\end{eqnarray}
\vspace*{-2mm}
In particular, $\rho_r(\bba)=\rho_c(\bba^H)$.
\end{lemma}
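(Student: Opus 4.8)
The plan is to work directly from the blockwise action of $\bba$ and reduce everything to the triangle inequality together with the fact that $\rho(\cdot)$ is the operator norm induced by the Euclidean norm. Writing $\by=\bba\bx$ and partitioning into length-$d$ blocks, the $r$th output block is $\by[r]=\sum_{\ell}\bba[r,\ell]\,\bx[\ell]$. Applying the triangle inequality in the Euclidean norm and then the bound $\|\bba[r,\ell]\,\bx[\ell]\|_2\le\rho(\bba[r,\ell])\,\|\bx[\ell]\|_2$ (legitimate because $\rho(\bba[r,\ell])=\lambda_{\max}^{1/2}(\bba^H[r,\ell]\bba[r,\ell])$ is exactly the largest singular value of the block), I obtain the single master estimate
\[
\|\by[r]\|_2\,\le\,\sum_{\ell}\rho(\bba[r,\ell])\,\|\bx[\ell]\|_2\quad\text{for every }r.
\]
Both claimed bounds would then follow by collapsing the right-hand side in two different ways. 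Since I only need upper bounds on the induced norms, it suffices to control the ratio $\|\bba\bx\|_{2,p}/\|\bx\|_{2,p}$ for an arbitrary $\bx$, so no extremal input vector needs to be exhibited.

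For $p=\infty$, I would take the maximum over $r$ of the master estimate and bound each $\|\bx[\ell]\|_2$ by $\max_\ell\|\bx[\ell]\|_2=\|\bx\|_{2,\infty}$, which pulls the input norm out of the sum and yields $\|\bba\bx\|_{2,\infty}\le(\max_r\sum_\ell\rho(\bba[r,\ell]))\,\|\bx\|_{2,\infty}$, i.e. $\|\bba\|_{2,\infty}\le\rho_r(\bba)$. For $p=1$, I would instead sum the master estimate over $r$ and exchange the order of summation, giving $\|\bba\bx\|_{2,1}\le\sum_\ell(\sum_r\rho(\bba[r,\ell]))\,\|\bx[\ell]\|_2$; bounding the inner column-sum by its maximum $\max_\ell\sum_r\rho(\bba[r,\ell])=\rho_c(\bba)$ and using $\sum_\ell\|\bx[\ell]\|_2=\|\bx\|_{2,1}$ then gives $\|\bba\|_{2,1}\le\rho_c(\bba)$. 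The only structural difference between the two cases is the summation pattern: the $p=\infty$ case produces a row-sum maximized over the output index, while the $p=1$ case produces a column-sum maximized over the input index.

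Finally, the identity $\rho_r(\bba)=\rho_c(\bba^H)$ follows by bookkeeping on the block indices: the $(r,\ell)$th block of $\bba^H$ is $(\bba[\ell,r])^H$, and since $\rho$ is invariant under conjugate transposition, $\rho_c(\bba^H)=\max_\ell\sum_r\rho((\bba[\ell,r])^H)=\max_\ell\sum_r\rho(\bba[\ell,r])$, which is exactly $\rho_r(\bba)$ after relabeling the maximization index. I do not anticipate a genuine obstacle here: the only points requiring care are the explicit identification of $\rho(\bba[r,\ell])$ with an operator norm, so that the elementary bound on $\|\bba[r,\ell]\bx[\ell]\|_2$ is justified, and keeping the two summation orders straight when passing from the master estimate to the row-sum ($p=\infty$) versus column-sum ($p=1$) quantities.
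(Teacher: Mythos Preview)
Your argument is correct and is exactly the natural one: reduce to the scalar $\ell_\infty$ and $\ell_1$ induced matrix norm bounds on the nonnegative matrix with entries $\rho(\bba[r,\ell])$, using that $\rho(\cdot)$ coincides with the spectral (operator) norm of a block. Note, however, that the present paper does not actually prove Lemma~\ref{lemma:norms}; it is quoted from \cite{EB08}, so there is no in-paper proof to compare against, but your derivation is precisely the standard one and would be what one expects in \cite{EB08}.
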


\subsection{Block OMP}

We begin by proving that (\ref{eq:sc}) is sufficient to ensure
recovery using the BOMP algorithm.

To prove the result, we first show that if $\brv_{\ell-1}$ is in $\R(\bbd_0$),
then the next chosen index
$i_{\ell}$ will be correct, namely it will correspond to a block
in $\bbd_0$. Assuming that this is true, it follows immediately that
$i_{1}$ is correct since clearly $\brv_0=\by$ lies in
$\R(\bbd_0)$.  Noting that $\brv_{\ell}$ lies in the space spanned
by $\by$ and $\bbd_0[i],i \in \I_{\ell}$, where $\I_{\ell}$
denotes the indices chosen up to stage $\ell$, it follows that if
$\I_{\ell}$ corresponds to correct indices, i.e., $\bbd[i]$ is a
block of $\bbd_0$ for all $i \in \I_{\ell}$, then $\brv_{\ell}$
also lies in $\R(\bbd_0$) and the next index will be correct as well.
Thus, at every step a correct subset is selected. It is also clear
that no index will be chosen twice since the new residual is
orthogonal to all the previously chosen subspaces; consequently
the correct $\bx_0$ will be recovered in $k$ steps.

It therefore remains to show that if $\brv_{\ell-1} \in \R(\bbd_0)$,
then under (\ref{eq:sc}) the next chosen index corresponds to a block
in $\bbd_0$. This is equivalent to requiring that
\begin{equation}
\label{eq:condp} z(\brv_{\ell-1})=\frac{\|\obbd_0^H
\brv_{\ell-1}\|_{2,\infty}}{\|\bbd_0^H \brv_{\ell-1}\|_{2,\infty}}<1.
\end{equation}
From the properties of the pseudo-inverse, $\R(\bbd_0)=\R(\bbd_0
\bbd_0^\dagger)$, and consequently $\bbd_0 \bbd_0^\dagger
\brv_{\ell-1}=\brv_{\ell-1}$. Since $\bbd_0\bbd_0^\dagger$ is
Hermitian,
\begin{equation}
\label{eq:dagger} (\bbd_0^\dagger)^H \bbd_0^H
\brv_{\ell-1}=\brv_{\ell-1}.
\end{equation}
Substituting (\ref{eq:dagger}) into (\ref{eq:condp}) yields $z(\brv_{\ell-1})=$
\begin{equation}
\frac{\|\obbd_0^H (\bbd_0^\dagger)^H \bbd_0^H
\brv_{\ell-1}\|_{2,\infty}}{\|\bbd_0^H \brv_{\ell-1}\|_{2,\infty}}\leq
\rho_r(\obbd_0^H (\bbd_0^\dagger)^H)=\rho_c(
\bbd_0^\dagger\obbd_0),
\end{equation}
where we used Lemma~\ref{lemma:norms}. This completes the proof.

\subsection{$\ell_2/\ell_1$ Optimization}

We now show that (\ref{eq:sc}) is also sufficient to ensure recovery
using (\ref{eq:l1}).
 To this end we rely on the
following lemma:
\begin{lemma} \cite{EB08}
\label{lemma:1inq} Suppose that $\bv$ is a length $N=Md$ vector
with $\|\bv[\ell]\|_2>0, \forall l$, and that $\bba$ is a matrix of size $L
\times N$, where $L=Rd$ and the blocks $\bba[\ell,r]$ are of size $d \times
d$. Then, $\|\bba\bv\|_{2,1} \leq \rho_c(\bba) \|\bv\|_{2,1}$. If
in addition the values of $\rho_c(\bba\bbj_{\ell})$ are not all
equal, then the inequality is strict. Here, $\bbj_{\ell}$ is an $N
\times d$ matrix that is all zero except for the $\ell$th $d\,\times\,d$ block
which equals $\bbi_d$.
\end{lemma}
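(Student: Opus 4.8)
The plan is to establish the bound directly from the row-block structure of $\bba\bv$ and then extract the equality condition to get the strict version. The non-strict inequality $\|\bba\bv\|_{2,1}\le\rho_c(\bba)\|\bv\|_{2,1}$ is in fact already a consequence of (\ref{eq:lem1}) in Lemma~\ref{lemma:norms}, since the definition (\ref{eq:2infm}) of the induced norm gives $\|\bba\bv\|_{2,1}\le\|\bba\|_{2,1}\|\bv\|_{2,1}\le\rho_c(\bba)\|\bv\|_{2,1}$. However, to obtain the strict inequality I would re-derive the bound block-by-block, because that derivation exposes exactly which step can be made strict.

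First I would expand by row-blocks. Writing $(\bba\bv)[r]=\sum_{\ell=1}^M\bba[r,\ell]\bv[\ell]$ for the $r$th length-$d$ block of $\bba\bv$, the triangle inequality together with the fact that $\rho(\bba[r,\ell])$ is the spectral (operator) norm of the $d\times d$ block $\bba[r,\ell]$ gives $\|(\bba\bv)[r]\|_2\le\sum_\ell\|\bba[r,\ell]\bv[\ell]\|_2\le\sum_\ell\rho(\bba[r,\ell])\|\bv[\ell]\|_2$. Summing over $r$ and interchanging the order of summation yields
\begin{equation}
\|\bba\bv\|_{2,1}\le\sum_{\ell=1}^M\Bigl(\sum_r\rho(\bba[r,\ell])\Bigr)\|\bv[\ell]\|_2.
\end{equation}
The key observation is that $\bba\bbj_\ell$ is precisely the $\ell$th column-block of $\bba$, so $\rho_c(\bba\bbj_\ell)=\sum_r\rho(\bba[r,\ell])$ and, by the definition of $\rho_c$, $\rho_c(\bba)=\max_\ell\rho_c(\bba\bbj_\ell)$. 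Substituting gives $\|\bba\bv\|_{2,1}\le\sum_\ell\rho_c(\bba\bbj_\ell)\|\bv[\ell]\|_2\le\rho_c(\bba)\sum_\ell\|\bv[\ell]\|_2=\rho_c(\bba)\|\bv\|_{2,1}$, which is the claimed bound.

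For the strict version I would focus only on the last inequality above, whose gap equals $\sum_\ell\bigl(\rho_c(\bba)-\rho_c(\bba\bbj_\ell)\bigr)\|\bv[\ell]\|_2$. Every summand is non-negative since $\rho_c(\bba)$ is the maximum of the $\rho_c(\bba\bbj_\ell)$. Under the hypothesis $\|\bv[\ell]\|_2>0$ for all $\ell$, if the $\rho_c(\bba\bbj_\ell)$ are not all equal then at least one index $\ell_0$ satisfies $\rho_c(\bba\bbj_{\ell_0})<\rho_c(\bba)$, contributing a strictly positive summand; hence the gap is strictly positive and the overall inequality is strict. This step works regardless of whether the earlier triangle-inequality and spectral-norm estimates happen to be tight, which is why the positivity of every $\|\bv[\ell]\|_2$ and the non-constancy of the column sums are exactly the right hypotheses.

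The argument is largely mechanical; the one point requiring care, and the part I would treat as the main obstacle, is the bookkeeping around strictness. Specifically, one must not attribute strictness to the triangle inequality or the spectral-norm bound (which may well be equalities), but isolate it in the final comparison of each $\rho_c(\bba\bbj_\ell)$ against its maximum, and then invoke $\|\bv[\ell]\|_2>0$ for every $\ell$ to ensure that the strictly-smaller column sum actually contributes to the gap.
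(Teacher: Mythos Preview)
The paper does not actually prove Lemma~\ref{lemma:1inq}; it is stated with a citation to \cite{EB08} and then used as a black box in the $\ell_2/\ell_1$ recovery argument. So there is no in-paper proof to compare against.

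Your argument is correct and is the natural one. The block expansion $(\bba\bv)[r]=\sum_\ell\bba[r,\ell]\bv[\ell]$, the chain $\|(\bba\bv)[r]\|_2\le\sum_\ell\rho(\bba[r,\ell])\|\bv[\ell]\|_2$, the identification $\rho_c(\bba\bbj_\ell)=\sum_r\rho(\bba[r,\ell])$, and the final majorization by $\rho_c(\bba)=\max_\ell\rho_c(\bba\bbj_\ell)$ are all sound. Your handling of strictness is also the right one: you correctly isolate the only place where the hypotheses $\|\bv[\ell]\|_2>0$ and ``not all $\rho_c(\bba\bbj_\ell)$ equal'' are needed, namely the last replacement of each column sum by the maximum, and you rightly point out that the earlier triangle and spectral-norm inequalities need not be strict. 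This is exactly how the lemma is used downstream in the paper (the two cases in the chain (\ref{eq:l1u})), so your emphasis matches the intended application.
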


To prove that (\ref{eq:l1}) recovers the correct vector $\bx_0$, let
$\bx'$ be another set of coefficients for which $\by=\bbd\bx'$.
Denote by $\bc_0$ and $\bc'$ the length $kd$ vectors consisting of
the non-zero elements of $\bx_0$ and $\bx'$, respectively. Let
$\bbd_0$ and $\bbd'$ denote the corresponding columns of $\bbd$
so that $\by=\bbd_0\bc_0=\bbd'\bc'$. From the assumption in
Proposition~\ref{prop:inv}, it follows that there cannot be two
different representations using the same blocks $\bbd_0$.
Therefore, $\bbd'$ must contain at least one block, $\bbz$, that
is not included in $\bbd_0$. From (\ref{eq:scs}),
$\rho_c(\bbd_0^\dagger \bbz)<1$. For any other block $\bbu$ in
$\bbd$, we must have that
\begin{equation}
\label{eq:nl1} \rho_c(\bbd_0^\dagger \bbu)\leq 1.
\end{equation}
 Indeed, if
$\bbu \in \bbd_0$, then $\bbu=\bbd_0[\ell]=\bbd_0\bbj_{\ell}$
where $\bbj_{\ell}$ is a matrix with $d$ columns which is all
zero, except for the $\ell$th block which is equal to $\bbi_d$. In
this case, $\bbd_0^\dagger \bbd_0[\ell]=\bbj_{\ell}$ and hence
$\rho_c(\bbd_{0}^{\dagger}\bbd_{0}[l])=\rho_c(\bbd_0^{\dagger}\bbu)=1$.
If, on the other hand, $\bbu=\obbd[\ell]$ for some $\ell$, then it follows from
(\ref{eq:scs}) that $\rho_c(\bbd_0^\dagger \bbu)<1$.

Now, suppose first that the blocks in $\bbd_0^\dagger \bbd'$ do not all have
the same spectral radius $\rho$. Then,
\begin{eqnarray}
\label{eq:l1u} \|\bc_0\|_{2,1} & = & \|\bbd_0^\dagger \bbd_0
\bc_0\|_{2,1}= \|\bbd_0^\dagger \by\|_{2,1}= \|\bbd_0^\dagger
\bbd'\bc'\|_{2,1} \nonumber \\
& < & \rho_c(\bbd_0^\dagger \bbd')\|\bc'\|_{2,1} \leq
\|\bc'\|_{2,1}
\end{eqnarray}
where the first equality stems from the fact that the columns of
$\bbd_0$ are linearly independent (a consequence of
the assumption in Proposition~\ref{prop:inv}), the first inequality follows from
Lemma~\ref{lemma:1inq} since $\|\bc'[\ell]\|_2>0$, $\forall l$, and the last
inequality follows from (\ref{eq:nl1}). If all the blocks of
$\bbd_0^\dagger \bbd'$ have identical spectral radius $\rho$, then
$\rho<1$ as for $\bbz \in \bbd'$, $\rho_c(\bbd_0^\dagger
\bbz)<1$. Repeating the calculations in (\ref{eq:l1u}), we find
that the first inequality is no longer strict. However, the second
inequality in (\ref{eq:l1u}) is strict instead so that the conclusion still holds.

Since $\|\bx_0\|_{2,1}=\|\bc_0\|_{2,1}$ and
$\|\bx'\|_{2,1}=\|\bc'\|_{2,1}$, we conclude that under
(\ref{eq:scs}), any set of coefficients used to represent the
original signal that is not equal to $\bx_0$ will result in a
larger $\ell_2/\ell_1$ norm.

\begin{footnotesize}

\end{footnotesize}

\end{document}